\documentclass[11pt,a4paper]{article}
\usepackage[utf8]{inputenc}
\usepackage[T1]{fontenc}
\usepackage[english]{babel}
\usepackage{lmodern}
\usepackage{microtype}
\usepackage[a4paper,top=2.5cm,bottom=2.5cm,left=2.7cm,right=2.7cm,footskip=1.2cm]{geometry}
\usepackage{amsmath,amssymb,amsthm,mathtools,bm}
\usepackage{graphicx,booktabs,array,tabularx,multirow,makecell,longtable}
\usepackage{caption,subcaption}
\usepackage{xcolor,enumitem,csquotes}
\usepackage{tikz}
\usetikzlibrary{arrows.meta,positioning,fit,calc,shapes.geometric,decorations.pathreplacing}
\usepackage{siunitx}
\newcolumntype{L}[1]{>{\raggedright\arraybackslash}p{#1}}
\newcolumntype{Y}{>{\raggedright\arraybackslash}X}
\usepackage[backend=biber,style=authoryear-comp,natbib=true,url=true,doi=true,isbn=false,maxbibnames=99,maxcitenames=2,uniquelist=false,sorting=nyt]{biblatex}
\usepackage[colorlinks=true,linkcolor=black,citecolor=black,urlcolor=black]{hyperref}
\usepackage[noabbrev,capitalize]{cleveref}
\theoremstyle{plain}
\newtheorem{proposition}{Proposition}

\newtheorem{lemma}[proposition]{Lemma}
\newtheorem{corollary}[proposition]{Corollary}
\theoremstyle{definition}
\newtheorem{definition}{Definition}
\newtheorem{assumption}{Assumption}

\newtheorem{criterion}{Criterion}
\theoremstyle{remark}

\crefname{assumption}{Assumption}{Assumptions}
\Crefname{assumption}{Assumption}{Assumptions}
\crefname{definition}{Definition}{Definitions}
\Crefname{definition}{Definition}{Definitions}
\crefname{criterion}{Criterion}{Criteria}
\Crefname{criterion}{Criterion}{Criteria}
\setlist{nosep,leftmargin=*}
\definecolor{seriesblue}{RGB}{31,78,121}
\definecolor{seriesgray}{RGB}{242,244,247}
\tikzset{seriesbox/.style={draw=black!70,rounded corners=2pt,align=center,inner sep=5pt,minimum height=9mm,fill=seriesgray},seriesarrow/.style={-{Latex[length=2.2mm]},thick,draw=black!75}}

\hypersetup{pdftitle={Manipulation, Informed Trading, and Regulation in Leveraged Event-Linked Markets},pdfauthor={Maksym Nechepurenko},pdfsubject={Event-linked leverage; manipulation; informed trading; regulation},pdfkeywords={prediction markets, event contracts, leverage, manipulation, informed trading, regulation}}
\title{Manipulation, Informed Trading, and Regulation\\in Leveraged Event-Linked Markets}
\author{Maksym Nechepurenko\thanks{Founder and Director of Research, ForesightFlow, the Research Department of Devnull FZCO, Dubai, United Arab Emirates. Email: \texttt{maksym@devnull.ae}.}}
\date{July 19, 2026}
\begin{document}
\maketitle
\begin{abstract}
Leverage does not create manipulation or informed trading in event markets, but it changes their economics. This paper separates four conduct channels that are often conflated: trading that distorts the quoted market price, conduct that changes the real-world event, interference with the resolution process, and informed trading that exploits non-public information without changing either the event or the resolution rule. We model each channel at the level needed for market design and regulation.

The central analytical result is a capital-constrained amplification identity. When gross event exposure is financed at leverage $L$, the return to a successful directional advantage scales with gross exposure while several important costs---information acquisition, operational influence, concealment, and enforcement exposure---need not scale proportionally. The resulting threshold is not a universal ``safe leverage'' number. It depends on executable capacity, the probability and magnitude of influence, detection and sanction functions, position limits, and whether profits can be realized before finality. We prove that leverage weakly enlarges the set of profitable influence opportunities under fixed-cost influence, but that the conclusion can be reversed by endogenous price impact, convex detection, or enforceable position limits.

The paper then connects the incentive model to the revised findings of Papers 1, 2, and 4 in this series. Historical-path replay is evidence about mechanical rule channels, not equilibrium conduct. Fill-side address data can support concentration and trading-intensity analysis, but Polymarket's off-chain quote lifecycle prevents address-level claims about spoofing or posted-liquidity withdrawal. Accordingly, the surveillance design is evidence-layered: fills, attributed quotes, external-event integrity, and oracle governance are treated as separate data domains.

The regulatory contribution is a principles-first framework rather than a legal classification claim. A leveraged event venue should be evaluated along five functions: product authorization, market-integrity surveillance, event-influence controls, resolution governance, and loss-bearing capital. A July 2026 source snapshot shows an actively evolving United States framework and no harmonized cross-jurisdictional category for leveraged event contracts. The paper therefore treats legal status as jurisdiction- and product-specific, while deriving venue requirements that remain valid across classifications.
\end{abstract}

\clearpage
\section{Introduction}
A trader can profit from an event contract in at least four economically different ways. The trader may forecast the event better than others. The trader may possess non-public information. The trader may move the quoted price and exit before correction. Or the trader may influence the event or the resolution process itself. A leverage layer changes all four, but not in the same way.

The distinction matters because a market can appear manipulation-resistant under one definition and remain exposed under another. A robust index can make short-lived mark manipulation more expensive without reducing the incentive to bribe an event participant. A strict oracle dispute process can protect final settlement without preventing trading on confidential information. A trading halt can remove one execution window while creating a focal time for pre-halt positioning. Regulation that treats all conduct as generic ``market abuse'' misses these mechanism differences.

This paper develops a compact framework for the conduct and regulatory problems created by leveraged event-linked markets. It revises the original Paper 3 in three respects. First, it replaces unsupported class-specific numerical scenarios with propositions and comparative statics. Second, it aligns all claims with the corrected balance-sheet logic of Paper 1 and the orthogonal design axes of Paper 2. Third, it removes stale or overconfident jurisdictional statements and uses a dated primary-source regulatory snapshot.

\subsection{Research questions}
We ask four questions:
\begin{enumerate}[label=(\roman*)]
\item Which forms of manipulation are specific to event-linked instruments, and which are ordinary trading-venue conduct applied to a bounded claim?
\item Under what conditions does leverage enlarge the profitable set of event influence or informed trading?
\item What can historical fill-side data identify, and what remains unobservable without quote-lifecycle, identity, event-integrity, or oracle-governance data?
\item Which control functions should apply regardless of whether a jurisdiction classifies the product as a derivative, event contract, crypto-asset instrument, or gambling product?
\end{enumerate}

\subsection{Contributions}
The paper contributes: (1) a four-channel conduct taxonomy; (2) a capital-constrained leverage-amplification model with explicit countervailing forces; (3) a separation of manipulation, informed trading, and lawful influence; (4) an evidence-layered surveillance architecture; and (5) a functional regulatory framework tied to product authorization, surveillance, event integrity, resolution governance, and loss-bearing capital.

\section{Instrument and balance-sheet setting}
Let an event claim pay $Y\in\{0,1\}$ at finality. A trader enters a directional position of gross exposure $N$ at price $p\in(0,1)$. The trader contributes capital $C>0$ and receives leverage
\[
L=\frac{N}{C}\ge 1.
\]
The paper does not assume that every leveraged event instrument is a conventional perpetual. The same conduct analysis applies to a synthetic perpetual, a dated margined future, or a physically backed financed claim, subject to differences in liquidation and finality.

For a long exposure held to finality, gross trading profit before fees is
\[
\Pi^{\mathrm{trade}}=N(Y-p).
\]
This identity is distinct from the account equity and creditor-loss identities in Paper 1. It measures the trader's directional payoff; it does not by itself establish that the position survives liquidation, that the venue can collect losses, or that a lender is protected.

\begin{assumption}[Conduct opportunity]
A conduct opportunity is described by an action $a\in\mathcal A$ that may affect one or more of: the trading price path, the distribution of $Y$, the reported resolution, or the trader's information set.
\end{assumption}

Let $K(a)$ be the direct cost of action, $D(a,N)$ the expected detection-and-sanction cost, $X(a,N)$ execution and price-impact cost, and $G(a,N)$ the expected gross payoff advantage relative to honest uninformed trading. Net advantage is
\[
V(a,N)=G(a,N)-K(a)-D(a,N)-X(a,N).
\]
Leverage matters through $N=LC$, but the sign of $\partial V/\partial L$ depends on how every term scales.

\section{A four-channel conduct taxonomy}
\begin{definition}[Market-price manipulation]
Conduct that changes a venue price, index input, funding input, or liquidation trigger without changing the real-world event or the valid final resolution.
\end{definition}

\begin{definition}[Outcome manipulation]
Conduct intended to change the distribution or realization of the underlying real-world event.
\end{definition}

\begin{definition}[Resolution manipulation]
Conduct intended to change, delay, censor, or strategically exploit the mechanism that maps evidence about the event into the contract's final payout vector.
\end{definition}

\begin{definition}[Informed trading]
Trading based on information not incorporated in the public price, without the trader changing the event or corrupting the resolution mechanism. Whether the conduct is unlawful depends on duties, source, venue rules, and jurisdiction; informational advantage alone is not a legal conclusion.
\end{definition}

The channels differ by causal object, evidence, and remedy. Price manipulation is primarily a market-microstructure problem. Outcome manipulation is an event-integrity problem. Resolution manipulation is a governance and oracle-security problem. Informed trading is an information-rights and market-access problem. A single transaction sequence can participate in more than one channel, but the channels should not be merged analytically.

\begin{table}[ht]
\centering\small
\caption{Conduct channels and primary evidence.}
\begin{tabularx}{\textwidth}{@{}L{2.8cm}YYY@{}}
\toprule
Channel & Causal object & Primary evidence & Principal controls\\
\midrule
Market price & Quotes, trades, index inputs & Attributed order lifecycle, fills, book state & Index robustness, anti-spoof rules, position and order controls\\
Outcome & Real-world event & External integrity records, participant links, position timing & Restricted-person rules, event-body cooperation, limits\\
Resolution & Proposal, dispute, evidence, payout mapping & Oracle and governance logs, evidence provenance & Separation of duties, bonds, challenge rights, fallback rules\\
Informed trading & Information advantage & Identity/duty evidence plus trading pattern & Access controls, restricted lists, surveillance, reporting\\
\bottomrule
\end{tabularx}
\end{table}

\section{Market-price manipulation under bounded support}
A bounded claim changes the geometry but not the basic logic of trade-based manipulation. Let $m$ be the pre-action mark and let an action generate a temporary displacement $\delta$ that can be monetized on a pre-existing position $N$. A reduced-form benefit is $N\delta$, while execution cost is $X(\delta,N)$.

\begin{proposition}[No linear free amplification]
Suppose $X(\delta,N)$ is convex in $N$, $X(\delta,0)=0$, and $\partial X/\partial N\to\infty$ as $N$ approaches executable capacity. Then increasing leverage does not imply unbounded manipulation profit, even when gross benefit is linear in $N$.
\end{proposition}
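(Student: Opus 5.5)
The plan is to make the statement precise and then argue directly from the shape of the net manipulation profit
\[
\pi(N)=N\delta-X(\delta,N),\qquad N=LC,
\]
with $\delta>0$ and $C$ held fixed. I read ``executable capacity'' as a finite bound $\bar N$ on the gross exposure that can actually be placed or monetized, so the feasible set is $N\in[0,\bar N)$. I read ``unbounded manipulation profit'' as $\sup_{L\ge 1}\pi(LC)=\infty$. The claim to prove is then that $\sup_{N\in[0,\bar N)}\pi(N)<\infty$, and that this supremum is attained at an interior or bounded optimum. In particular, the profit available at leverage $L$, namely $\pi(\min\{LC,\bar N\})$, is bounded uniformly in $L$. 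So raising $L$ does not deliver the linear scaling $LC\delta$ that the gross benefit alone would suggest.

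The proof proceeds in three steps.

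First, concavity. Since $X$ is convex in $N$, $\pi$ is concave in $N$. Hence $\pi'(N)=\delta-\partial X/\partial N$ is nonincreasing.

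Second, a finite maximiser. Because $\partial X/\partial N\to\infty$ as $N\uparrow\bar N$, there is $N^\dagger<\bar N$ with $\partial X/\partial N>\delta$ for all $N\in(N^\dagger,\bar N)$. On that interval $\pi$ is strictly decreasing. By concavity, $\pi$ therefore attains its maximum on the compact set $[0,N^\dagger]$, at some $N^*$ solving $\partial X/\partial N(\delta,N^*)=\delta$, or at the corner $N^*=0$.

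Third, the bound. Since $X(\delta,0)=0$ and $X$ is convex, hence continuous on $[0,N^\dagger]$, we get
\[
\pi(N)\le\pi(N^*)\le N^\dagger\delta<\infty .
\]
This bound is independent of $L$. Once $LC\ge N^*$, further leverage yields either no additional feasible exposure, or exposure that the trader optimally declines to use because it lowers $\pi$.

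I will close with a short remark contrasting this with the linear benchmark $X\equiv 0$, where $\pi=LC\delta\to\infty$. That contrast shows the hypothesis on $X$ is exactly what breaks free amplification.

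The main obstacle is interpretive rather than technical. The statement does not say what happens to leverage beyond capacity. The argument needs either a finite $\bar N$ or, more generally, $\partial X/\partial N$ eventually exceeding $\delta$. I would first state the finite-capacity reading explicitly. As a fallback, if capacity is infinite, the same proof goes through under the weaker condition $\liminf_{N}\partial X/\partial N>\delta$. A second, minor point is that $X$ need not be differentiable everywhere. I would handle this by replacing $\partial X/\partial N$ with the right derivative, which exists and is nondecreasing for a convex function, so none of the steps above change.
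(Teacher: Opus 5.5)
Your proof is correct and follows the paper's argument in more detail: the paper also writes net benefit as $N\delta-X(\delta,N)-K-D$, uses convexity plus the divergent marginal execution cost to keep the optimum strictly below executable capacity, and notes that extra leverage only enlarges a feasible set already cut off by that boundary. One small slip is that the explicit bound $\pi(N^*)\le N^\dagger\delta$ tacitly assumes $X\ge 0$, which does not follow from convexity and $X(\delta,0)=0$ alone (a convex $X$ with negative initial slope dips below zero); you only need $\pi(N^*)<\infty$, which holds because $X$ is real-valued, and omitting $K$ and $D$ is harmless because, as nonnegative costs, they only lower the profit being bounded.
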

\begin{proof}
Net benefit is $N\delta-X(\delta,N)-K-D$. Convexity and the divergent marginal execution cost imply a finite maximizer whenever one exists. Leverage expands the feasible notional set but does not remove the execution-capacity boundary.
\end{proof}

Bounded support creates direction-dependent room to move: an upward displacement cannot exceed $1-m$ and a downward displacement cannot exceed $m$. It does not imply that manipulation is easiest near the midpoint. That empirical conclusion depends on executable depth, spread, fees, and the index rule. Paper 4 shows why address-level spoofing claims are not available from on-chain fills alone: order placement and cancellation occur off-chain in the hybrid central limit order book. Fill-side wash or reversal candidates are descriptive signals, not proof of spoofing or intent.

Index manipulation and liquidation-trigger manipulation require separate treatment. If a venue uses the same thin external book both as trade venue and liquidation reference, an attacker may profit without exiting the manipulated instrument: the price displacement can trigger third-party liquidations or funding transfers. Robust index construction therefore needs source diversity, stale-source rules, depth and dispersion checks, and explicit behavior when no trustworthy source remains.

\section{Outcome manipulation and leverage}
Let action $a$ increase the probability of a favorable outcome from $q_0$ to $q_1(a)$, with $\Delta q(a)=q_1(a)-q_0\ge0$. For a long position entered at price $p$, the incremental expected gross benefit attributable to influence is
\[
G(a,N)=N\Delta q(a).
\]
Net influence value is
\[
V_I(a,L;C)=LC\Delta q(a)-K(a)-D(a,LC)-X(a,LC).
\]

\begin{proposition}[Leverage expands profitable influence under fixed costs]
Fix $a$ with $\Delta q(a)>0$. If $K(a)$ is independent of notional and $D(a,N)+X(a,N)$ is constant in $N$ over the feasible interval, then the set of $L$ for which $V_I(a,L;C)>0$ is an upper interval. Its threshold is
\[
L^*(a)=\frac{K(a)+D(a)+X(a)}{C\Delta q(a)}.
\]
\end{proposition}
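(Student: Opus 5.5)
The plan is to reduce $V_I(a,\cdot\,;C)$ to an affine function of $L$ and read off its positivity set. Under the hypotheses, $K(a)$ does not depend on notional. The sum $D(a,N)+X(a,N)$ is constant over the feasible interval of $N=LC$, so I will write it as $D(a)+X(a)$; only the sum needs to be constant, not each term. Then, for every feasible $L$,
\[
V_I(a,L;C)=C\,\Delta q(a)\,L-\bigl(K(a)+D(a)+X(a)\bigr),
\]
which is affine in $L$ with slope $C\Delta q(a)>0$, because $C>0$ and $\Delta q(a)>0$ by assumption.

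An affine function with strictly positive slope is strictly increasing. So if $V_I(a,L_0;C)>0$ and $L>L_0$ is feasible, then $V_I(a,L;C)>0$. This is the upper-interval property. Solving $V_I=0$ gives the unique root
\[
L^*(a)=\frac{K(a)+D(a)+X(a)}{C\Delta q(a)},
\]
and the strict positivity set is exactly $\{L: L>L^*(a)\}$ intersected with the feasible leverage set $\{L\ge1: LC \text{ in the feasible interval}\}$.

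I would end by stating the boundary cases, since "upper interval" has to be read relative to the feasible set:
\begin{itemize}
\item if $L^*(a)<1$, every feasible leverage is profitable;
\item if $L^*(a)$ is at least the maximum feasible leverage (for instance, the leverage at which $LC$ reaches executable capacity), the set is empty, which is the trivial upper interval;
\item if the total cost $K+D+X$ is nonpositive, then $L^*\le 0$ and every feasible $L$ qualifies.
\end{itemize}
Because the feasible set is an interval, its intersection with $(L^*(a),\infty)$ is again an interval, closed under upward moves within the feasible range.

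The main obstacle is interpretive rather than technical. The phrase "constant in $N$ over the feasible interval" has to be pinned down so that $L^*(a)$ is well defined, and I would state it explicitly as the feasible interval of notional, with $D(a)+X(a)$ denoting that constant sum. I would also remark, as a link to the previous proposition (No linear free amplification), that the monotonicity depends entirely on the constant-cost hypothesis. If $D+X$ grows with $N$ faster than linearly, the slope argument fails, and this is the reversal the abstract mentions.
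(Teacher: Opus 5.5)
Your proof is correct and matches the paper's own argument: under the fixed-cost hypotheses $V_I$ is affine in $L$ with slope $C\Delta q(a)>0$, so its positivity set is the upper interval $L>L^*(a)$ obtained by solving $V_I=0$. Your explicit handling of the feasible-leverage boundary cases ($L^*(a)<1$, an empty set, nonpositive total cost) is a small, useful addition that the paper leaves implicit.
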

\begin{proof}
Under the assumptions, $V_I$ is affine and strictly increasing in $L$. Solving $V_I>0$ gives the threshold.
\end{proof}

The proposition is intentionally narrow. It identifies the mechanism behind amplification but not a universal leverage cap. The fixed-cost assumption may approximate bribery, access acquisition, or operational interference over a limited notional range. It fails when the action must scale with event size, when detection rises with position visibility, or when the market price incorporates the influence risk.

\begin{proposition}[Countervailing convexity]
If $D(a,N)+X(a,N)$ is differentiable and its marginal cost exceeds $\Delta q(a)$ for all $N\ge \bar N$, then net influence value is decreasing beyond $\bar N$. Position limits at or below $\bar N$ can therefore prevent leverage from reaching a region of increasing net influence value.
\end{proposition}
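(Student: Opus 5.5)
The plan is to differentiate net influence value with respect to notional and sign the derivative. Write $V_I$ as a function of $N$ for fixed $a$ and $C$:
\[
\tilde V(N)=N\Delta q(a)-K(a)-D(a,N)-X(a,N).
\]
Since $K(a)$ does not depend on notional and $D+X$ is differentiable by hypothesis, $\tilde V$ is differentiable in $N$ with
\[
\tilde V'(N)=\Delta q(a)-\frac{\partial}{\partial N}\bigl[D(a,N)+X(a,N)\bigr].
\]
The hypothesis says the bracketed marginal cost exceeds $\Delta q(a)$ for every $N\ge\bar N$, so $\tilde V'(N)<0$ on $[\bar N,\infty)$ intersected with the feasible interval. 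By the mean value theorem, $\tilde V$ is then strictly decreasing there: for $\bar N\le N_1<N_2$ we have $\tilde V(N_2)-\tilde V(N_1)=\tilde V'(\xi)(N_2-N_1)<0$ for some $\xi\in(N_1,N_2)$.

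Next I would translate this into leverage through $N=LC$ with $C>0$ fixed. The map $L\mapsto LC$ is strictly increasing, so $V_I(a,L;C)=\tilde V(LC)$ is strictly decreasing for $L\ge \bar N/C$. This gives the first sentence of the proposition.

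For the position-limit clause, a limit $N\le\bar N_{\mathrm{lim}}$ with $\bar N_{\mathrm{lim}}\le\bar N$ restricts the feasible leverage to $L\le \bar N_{\mathrm{lim}}/C$. I read ``a region of increasing net influence value'' as the part of the region $N\ge\bar N$ where a larger position would otherwise have paid. The main step is to show the limit does not cut off anything profitable. On $[\bar N,\infty)$ the supremum of $\tilde V$ is attained at $\bar N$, because $\tilde V$ is decreasing there. Hence any notional above $\bar N$ is dominated by $\bar N$ itself, and a cap at $\bar N$ removes only dominated choices. A cap strictly below $\bar N$ further truncates the feasible set. So no leverage choice permitted under the limit lies beyond $\bar N$, and none forbidden by it improves on $\bar N$.

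The main obstacle is interpretive rather than computational. The clause ``prevent leverage from reaching a region of increasing net influence value'' is loose, because under the hypothesis $\tilde V$ is not increasing beyond $\bar N$ at all. I would therefore state precisely what is proved. First, the optimal notional over the feasible set lies in $[0,\bar N]$, so $\arg\max_N \tilde V\subseteq[0,\bar N]$ whenever a maximizer exists, in the spirit of the earlier ``No linear free amplification'' proposition. Second, a binding limit at or below $\bar N$ is thus without loss for this conclusion, and leverage above $\bar N/C$ cannot raise influence profit. I would also remark that the conclusion covers only the region beyond $\bar N$. Below $\bar N$, $\tilde V$ may be increasing, as in the fixed-cost proposition, so the limit bounds amplification but does not eliminate it.
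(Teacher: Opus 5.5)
The paper gives no written proof of this proposition, and the argument it implicitly relies on is exactly yours: sign $\partial_N V_I=\Delta q(a)-\partial_N\bigl[D(a,N)+X(a,N)\bigr]<0$ for $N\ge\bar N$, then carry the monotonicity over to $L$ via $N=LC$; your proposal is correct, and your precise restatement of the loose position-limit clause (optimal notional lies in $[0,\bar N]$) is a sound way to make it rigorous. One small slip to fix: for a cap $\ell<\bar N$ it is false in general that nothing the cap forbids improves on $\bar N$, since points of $(\ell,\bar N)$ can have $\tilde V>\tilde V(\bar N)$ when $\tilde V$ already declines before $\bar N$, and a limit lowers attainable value only when it binds below the maximizer $N^*$, as in the appendix corollary.
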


Event classes should not be ranked by invented cost numbers. The correct empirical objects are: influence technology, event scale, access concentration, detection probability, sanction enforceability, executable position capacity, and linked-account aggregation. Sports, politics, corporate actions, and macro releases may differ on these dimensions, but the differences require event-specific evidence.

\section{Informed trading under leverage}
Let an informed trader estimate a favorable-outcome probability $q=p+e$, where $e>0$ is informational edge relative to the transaction price. Ignoring execution and liquidation, expected gross advantage is $Ne=LCe$.

Let $K_I$ denote information-acquisition and operational costs, and let $R(N)$ collect execution, financing, liquidation, and enforcement risks. The trader participates when
\[
LCe>K_I+R(LC).
\]

\begin{proposition}[Fixed-cost amortization]
If $K_I>0$ is fixed and $R(N)/N$ is non-increasing over a feasible interval, then the minimum edge required for participation,
\[
e^*(N)=\frac{K_I+R(N)}{N},
\]
is non-increasing in $N$.
\end{proposition}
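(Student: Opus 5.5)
The plan is to split the threshold $e^*(N)$ into a fixed-cost term and a variable-risk term and show each is non-increasing in $N$ on the feasible interval. Write
\[
e^*(N)=\frac{K_I}{N}+\frac{R(N)}{N}.
\]
Since the two terms are handled separately, it suffices to show each is non-increasing; their sum is then non-increasing too.

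For the first term, $K_I>0$ is a constant and the feasible interval lies in $N>0$, since $N=LC$ with $L\ge1$ and $C>0$. For $0<N_1<N_2$ we get $K_I/N_2<K_I/N_1$, so $N\mapsto K_I/N$ is strictly decreasing. For the second term, $R(N)/N$ is non-increasing by hypothesis, with no differentiability needed. Adding the two inequalities for any $N_1<N_2$ in the interval gives $e^*(N_2)\le e^*(N_1)$; in fact the inequality is strict because $K_I>0$.

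I will then link this back to participation. The condition $LCe>K_I+R(LC)$ is equivalent to $e>e^*(N)$ after dividing by $N>0$. So the set of edges for which the informed trader participates weakly expands as $N$, and hence $L$ at fixed $C$, increases. This mirrors the threshold in the earlier proposition on fixed-cost influence.

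There is no real obstacle; the only care point is stating the domain properly. The hypotheses are only asserted on a feasible interval of strictly positive $N$, so the comparison must be restricted to points within that interval. I will also add a remark that the conclusion fails if $R(N)/N$ increases, for example under convex enforcement risk as in the countervailing-convexity proposition. That remark keeps the scope of the result as narrow as intended.
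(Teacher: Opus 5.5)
Your argument is correct and matches the intended one. The paper gives no written proof of this proposition, and the obvious route is yours: split $e^*(N)=K_I/N+R(N)/N$ into a strictly decreasing term and a term that is non-increasing by hypothesis, which in fact yields strict decrease. One caution on your closing remark: an increasing $R(N)/N$ only \emph{can} break the conclusion, because $e^*$ stays non-increasing as long as $R(N)/N$ rises no faster than $K_I/N$ falls, so the hypothesis is sufficient rather than necessary.
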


Thus leverage can make weaker informational advantages privately tradable by spreading fixed acquisition costs across a larger position. This is different from claiming that Sharpe ratios are preserved. Financing cost, liquidation truncation, nonlinear price impact, and detection may change both expected return and variance.

The legal boundary is not statistical. A pattern of profitable pre-event trading can be consistent with research skill, lawful private information, breach of duty, prohibited participant influence, or outcome manipulation. Market surveillance can prioritize cases, but legal attribution requires evidence about identity, duty, source, and conduct.

\section{How risk-engine rules reallocate conduct incentives}
The revised Paper 1 finds that historical-path replay is a mechanical stress test, not an equilibrium deployment result. This distinction is especially important for manipulation. An engine rule may change the opportunity set that generated the observed path.

\subsection{Margin and leverage compression}
Higher margin reduces gross exposure for a fixed capital budget and therefore weakly reduces the linear amplification term. It can also force informed or manipulative traders to act earlier, split positions across accounts, or use correlated venues. A margin rule is therefore a capacity control, not a complete conduct control.

\subsection{Funding}
Funding transfers create a second monetization channel. An actor may manipulate the reference, the mark, or both to receive funding even without a favorable terminal outcome. Caps reduce payer insolvency but can weaken convergence incentives. Surveillance must attribute both price displacement and resulting transfer benefits.

\subsection{Trading halts}
A halt removes post-halt execution but creates a known boundary. It can concentrate position-taking before the cutoff and increase the value of information about whether the halt or close will occur as scheduled. A halt that mechanically reduces final-hour liquidations does not establish lower manipulation risk.

\subsection{Physical conversion and debt extinction}
A physically backed design that extinguishes debt before finality removes one creditor-loss channel, as developed in the Axient work, but it does not remove event influence, informed trading, oracle corruption, or execution manipulation during hard-flat. Instrument architecture reallocates the risk surface rather than eliminating it.

\section{Evidence-layered surveillance}
A credible surveillance system should state what each data layer can and cannot identify.

\begin{criterion}[Fill evidence]
On-chain fills support concentration, timing, size, counterparty, cross-market, and realized-position analyses. They do not reveal unfilled orders or cancellations.
\end{criterion}

\begin{criterion}[Quote-lifecycle evidence]
Spoofing, quote stuffing, cancellation races, and displayed-liquidity withdrawal require attributed order placement, modification, and cancellation records. Market-level snapshots without participant attribution are insufficient for address-level findings.
\end{criterion}

\begin{criterion}[External-event evidence]
Outcome manipulation requires records from the event domain: participant accreditation, integrity alerts, communications, event logs, and links between trading identities and persons with influence.
\end{criterion}

\begin{criterion}[Resolution evidence]
Oracle surveillance requires proposal, challenge, bond, evidence, voting, escalation, and final payout records with deterministic links to the traded market.
\end{criterion}

A surveillance alert should therefore carry an evidence grade. A fill anomaly can justify review but not a finding of spoofing. A profitable pre-event position can justify identity and duty checks but not a finding of insider trading. A disputed oracle request can justify governance review but not a finding that the real-world event was manipulated.

\section{Regulatory framework: functions before labels}
Legal classification is temporally unstable and jurisdiction-specific. This section therefore separates durable control functions from a dated source snapshot.

\subsection{Five regulatory functions}
\begin{enumerate}
\item \textbf{Product authorization.} Who may list the instrument, to whom, with what collateralization, leverage, disclosure, and suitability rules?
\item \textbf{Market integrity.} Which entity monitors manipulation, concentrated positions, linked accounts, funding transfers, and liquidation abuse?
\item \textbf{Event integrity.} Which persons are prohibited because they can influence the event, and how are sports bodies, employers, issuers, or public agencies connected to surveillance?
\item \textbf{Resolution governance.} Who controls definitions, evidence, proposal, dispute, override, void, and fallback decisions?
\item \textbf{Loss-bearing capital.} Who bears shortfall, operational loss, oracle error, custody failure, and unresolved disputes?
\end{enumerate}

\subsection{United States snapshot: July 2026}
The Commodity Futures Trading Commission (CFTC) describes regulated event contracts as derivatives, commonly swaps or futures, and subjects designated contract markets to core principles including resistance to manipulation and enforcement of contract terms \parencite{cftc_prediction_2026,cftc_anprm_2026}. In February 2026 the Commission withdrew its 2024 event-contract proposal; in March it opened a new advance notice of proposed rulemaking; and in June it proposed a structured framework for contracts involving enumerated activities and separate reporting rules for certain fully collateralized event contracts \parencite{cftc_withdrawal_2026,cftc_nprm_2026,cftc_reporting_2026}. The framework is therefore active and unsettled. The CFTC also issued 2026 advisories concerning non-public information and market obligations, illustrating that event-market integrity is not reducible to ordinary price manipulation \parencite{cftc_enforcement_2026,cftc_market_advisory_2026}.

Nothing in those sources establishes a general authorization for leveraged event-linked perpetuals. A product would require analysis of its exact contract form, trading venue, clearing, customer access, margin, and underlying event.

\subsection{Cross-jurisdictional interpretation}
Outside the United States, classification may involve derivatives, securities, crypto-asset, gambling, consumer-protection, or prohibited-offering regimes. The same user interface can embed different legal instruments. The paper therefore does not claim that a decentralized wrapper or offshore entity removes regulation. Nor does it claim that every event contract is gambling or every margined event instrument is a regulated future. Those are legal conclusions requiring current jurisdiction-specific analysis.

\begin{table}[ht]
\centering\small
\caption{Functional minimum independent of classification.}
\begin{tabularx}{\textwidth}{@{}L{3cm}YY@{}}
\toprule
Function & Minimum evidence & Minimum control\\
\midrule
Authorization & Instrument terms, customer class, collateral and leverage & Listing review, disclosures, limits\\
Market integrity & Orders, fills, positions, funding, liquidations & Real-time surveillance and enforcement\\
Event integrity & Restricted-person and event-body data & Access restrictions and cooperation protocols\\
Resolution & Versioned rules, evidence, proposals and disputes & Separation of duties and appeal/fallback path\\
Capital & Account and waterfall records & Segregation, stress tests, recovery plan\\
\bottomrule
\end{tabularx}
\end{table}

\section{Design implications}
The framework yields seven concrete requirements.

\begin{enumerate}
\item Position limits must aggregate linked accounts and correlated instruments, not only one wallet in one market.
\item Leverage tiers should depend on event influence concentration, executable capacity, and resolution governance, not only volatility.
\item Persons with direct or material influence over the event should be restricted or subject to enhanced disclosure and monitoring.
\item Quote-lifecycle records should be retained with participant attribution even when matching is off-chain.
\item Oracle and market-operation roles should be separated; emergency overrides should be versioned, logged, and reviewable.
\item Surveillance should trace benefit channels: trading profit, funding receipt, liquidation transfer, oracle reward, and related-market profit.
\item Loss controls and conduct controls should be evaluated separately. A solvent venue can remain manipulable; a well-surveilled venue can remain undercapitalized.
\end{enumerate}

No single control is sufficient. Low leverage cannot prevent corruption of a resolution committee. Strong oracle bonds cannot prevent an athlete from influencing a match. Know-your-customer controls cannot identify every beneficial owner or off-platform hedge. The defensible architecture is layered and explicit about residual risk.

\section{Limitations and empirical agenda}
The theoretical model is partial equilibrium. It does not solve for endogenous event prices, manipulator entry, market-maker withdrawal, cross-venue hedging, or strategic enforcement. The propositions isolate comparative statics under stated assumptions.

No new empirical calculation is performed. Paper 4's fill-side findings are used only to define observability boundaries. The paper does not estimate manipulation prevalence, insider-trading prevalence, detection probability, sanctions, influence costs, or safe leverage by event class.

The regulatory review is a dated research snapshot, not legal advice. Official sources can change after publication. Any deployment requires jurisdiction-specific counsel and a current review of product, venue, customer, custody, clearing, and marketing rules.

The most useful empirical extensions are: linked-entity position concentration; attributed quote-lifecycle analysis; event-participant and trader-link datasets under lawful access; oracle proposal and dispute reconstruction; cross-venue exposure; and quasi-experimental analysis of leverage or position-limit changes.

\section{Conclusion}
Leveraged event markets combine ordinary trading-venue abuse with two event-specific channels: influence over the real-world outcome and influence over the mapping from evidence to payout. Informed trading is a fourth channel that may resemble manipulation in transaction data but differs causally and legally.

Leverage amplifies conduct when gross advantage scales faster than influence, detection, execution, and sanction costs. That condition can hold, but it is not automatic and cannot be summarized by an event-class slogan or a universal leverage threshold. The appropriate design response is functional: control position capacity, preserve attributable order data, restrict influential participants, govern resolution separately from trading, and maintain explicit loss-bearing capital.

The main scientific conclusion is therefore not that leveraged event markets are inherently impermissible or inherently safe. It is that their conduct risks are separable, measurable at different evidence layers, and governed by different mechanisms. A credible venue must demonstrate control over all of them.

\appendix
\section{Additional comparative statics}
Let $V(N)=N\Delta q-K-D(N)-X(N)$.
\begin{lemma}[Interior optimum]
If $D+X$ is strictly convex, continuously differentiable, and $(D'+X')(0)<\Delta q<\lim_{N\to\bar N}(D'+X')(N)$, then $V$ has a unique interior maximizer $N^*$ satisfying
\[
D'(N^*)+X'(N^*)=\Delta q.
\]
\end{lemma}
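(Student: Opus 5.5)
The plan is to work on the domain $[0,\bar N)$ with the map $h(N)=D'(N)+X'(N)$, and to reduce everything to the sign of $V'(N)=\Delta q-h(N)$. Because $D+X$ is strictly convex and continuously differentiable, $h$ is continuous and strictly increasing. This monotonicity carries the whole argument: it gives existence of a root of $V'$ by the intermediate value theorem, uniqueness by strict monotonicity, and global optimality by a sign argument on $V'$.

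\textbf{Existence.} By hypothesis, $h(0)<\Delta q$. The limit $\lim_{N\to\bar N}h(N)$ exists in $(-\infty,\infty]$ because $h$ is monotone, and it strictly exceeds $\Delta q$. So there is some $N_1<\bar N$ with $h(N_1)>\Delta q$. Continuity of $h$ on $[0,N_1]$ and the intermediate value theorem then give $N^*\in(0,N_1)$ with $h(N^*)=\Delta q$, that is, $D'(N^*)+X'(N^*)=\Delta q$.

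\textbf{Uniqueness of the root.} Since $h$ is strictly increasing, $N^*$ is the only solution of the first-order condition. We also get the sign pattern $V'(N)>0$ for $N<N^*$ and $V'(N)<0$ for $N^*<N<\bar N$.

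\textbf{Global maximization.} From this sign pattern and the mean value theorem, $V$ is strictly increasing on $[0,N^*]$ and strictly decreasing on $[N^*,\bar N)$. Hence $V(N)<V(N^*)$ for every $N\neq N^*$ in the domain. So $N^*$ is the unique maximizer, and it is interior because $0<N^*<\bar N$. Equivalently, $V=N\Delta q-K-(D+X)$ is strictly concave, so any stationary point is the unique global maximizer; either route suffices.

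\textbf{Main obstacle.} The step needing the most care is the one that uses strict convexity to conclude that $h$ is strictly increasing. For a differentiable function, strict convexity is equivalent to a strictly increasing derivative. I would cite this standard fact, or prove it directly from the chord inequality: for $x<y$, $h(x)<\frac{(D+X)(y)-(D+X)(x)}{y-x}<h(y)$. A secondary point is the case $\bar N=\infty$ or a divergent limit at $\bar N$. There I only use that $h$ eventually exceeds $\Delta q$, which holds whether the limit is finite or $+\infty$, so no separate treatment is needed. Finally, I would remark that this refines the Countervailing convexity proposition: the threshold $\bar N$ there can be taken to be $N^*$.
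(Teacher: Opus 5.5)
Your proof is correct and follows the paper's route. Strict convexity of $D+X$ makes $V$ strictly concave, equivalently makes the marginal cost $D'+X'$ strictly increasing, and the endpoint conditions force $V'$ to change sign exactly once, so the root is the unique interior maximizer. You simply spell out the steps the paper leaves implicit: the intermediate value theorem argument, and the observation that the limit at $\bar N$ exists by monotonicity and may be finite or $+\infty$.
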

\begin{proof}
Strict convexity of $D+X$ makes $V$ strictly concave. The derivative changes sign by the endpoint conditions, yielding a unique root and maximizer.
\end{proof}

\begin{corollary}[Effect of a binding position limit]
If an enforceable aggregate limit $\ell<N^*$ applies, the maximal conduct value is $V(\ell)<V(N^*)$.
\end{corollary}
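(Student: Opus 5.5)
The plan is to read the corollary off the preceding Lemma (Interior optimum) together with strict concavity of $V$. First, I take the standing hypotheses of that lemma: $D+X$ is strictly convex and continuously differentiable on $[0,\bar N)$, and the endpoint conditions $(D'+X')(0)<\Delta q<\lim_{N\to\bar N}(D'+X')(N)$ hold. Under these, the lemma gives a unique interior maximizer $N^*\in(0,\bar N)$ of $V(N)=N\Delta q-K-D(N)-X(N)$, characterized by $V'(N^*)=\Delta q-(D'+X')(N^*)=0$. An enforceable aggregate limit $\ell$ restricts the feasible set to $[0,\ell]$. I read ``maximal conduct value'' as $\sup_{N\in[0,\ell]}V(N)$, so the claim has two parts: this supremum equals $V(\ell)$, and $V(\ell)<V(N^*)$.

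For the first part I use strict concavity. Since $D'+X'$ is strictly increasing (strict convexity plus $C^1$), $V'(N)=\Delta q-(D'+X')(N)$ is strictly decreasing. It vanishes at $N^*$, so $V'(N)>0$ for all $N<N^*$. Hence $V$ is strictly increasing on $[0,N^*]$. As $\ell<N^*$, the constrained maximum over $[0,\ell]$ is attained at the boundary point $N=\ell$, with value $V(\ell)$.

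For the strict inequality, the same monotonicity on $[0,N^*]$ gives $V(\ell)<V(N^*)$ directly, since $\ell<N^*$ and $V$ is strictly increasing there. Alternatively, uniqueness of the maximizer in the lemma gives $V(\ell)\le V(N^*)$, and equality is ruled out because $\ell\neq N^*$.

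The only step needing care is the interpretive one, not any analysis. ``Enforceable aggregate'' must mean that the constraint binds on total notional across linked accounts, so the trader cannot bypass $\ell$ by splitting $N$. I would state this explicitly as the modelling assumption that the feasible set is $[0,\ell]$. I would also note that the result is stated for $\ell\ge 0$ within the domain of $V$; the case $\ell=0$ is covered by the same argument.
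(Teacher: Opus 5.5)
Your proposal is correct and is the argument the paper leaves implicit. The corollary is stated without proof as an immediate consequence of the Lemma's strict concavity. Your step showing $V$ is strictly increasing on $[0,N^*]$ is what gives both conclusions: the constrained maximum over $[0,\ell]$ is attained at $\ell$, and $V(\ell)<V(N^*)$.
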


The corollary concerns enforceable aggregate exposure. Per-wallet limits need not bind when identities can be fragmented or exposure can be replicated across venues and related contracts.

\section{Regulatory-source discipline}
The regulatory section follows four rules: use primary sources where available; date every snapshot; distinguish enacted rules from proposals, advisories, litigation positions, and no-action relief; and avoid inferring authorization from the absence of a prohibition.

\begin{table}[ht]
\centering\small
\caption{United States primary-source sequence used in this revision.}
\begin{tabularx}{\textwidth}{@{}L{2.4cm}L{3.4cm}Y@{}}
\toprule
Date & Source type & Relevance\\
\midrule
4 Feb 2026 & CFTC withdrawal & 2024 event-contract proposal withdrawn\\
25 Feb 2026 & Enforcement advisory & Non-public information and prohibited influence cases\\
12 Mar 2026 & ANPRM and market advisory & New rulemaking inquiry and DCM obligations\\
10 Jun 2026 & NPRM & Enumerated-activity/public-interest framework proposal\\
25 Jun 2026 & Reporting NPRM & Proposed reporting treatment for certain fully collateralized event contracts\\
\bottomrule
\end{tabularx}
\end{table}

\section{Notation}
\begin{tabularx}{\textwidth}{@{}L{2.2cm}Y@{}}
\toprule
Symbol & Meaning\\
\midrule
$Y$ & Final binary payout\\
$p$ & Entry price\\
$C$ & Trader capital\\
$N$ & Gross exposure\\
$L=N/C$ & Leverage\\
$a$ & Conduct action\\
$K(a)$ & Direct action cost\\
$D(a,N)$ & Detection and sanction cost\\
$X(a,N)$ & Execution and price-impact cost\\
$\Delta q(a)$ & Change in favorable-outcome probability caused by action\\
$e$ & Informational edge relative to price\\
\bottomrule
\end{tabularx}

\printbibliography[heading=bibintoc,title={References}]

@online{cftc_prediction_2026,author={{Commodity Futures Trading Commission}},title={Understanding Prediction Markets and Event Contracts},year={2026},url={https://www.cftc.gov/LearnandProtect/PredictionMarkets},urldate={2026-07-19}}

@online{cftc_anprm_2026,author={{Commodity Futures Trading Commission}},title={Prediction Markets: Advance Notice of Proposed Rulemaking},year={2026},url={https://www.cftc.gov/LawRegulation/FederalRegister/proposedrules/2026-05105.html},urldate={2026-07-19}}

@online{cftc_withdrawal_2026,author={{Commodity Futures Trading Commission}},title={CFTC Withdraws Event Contracts Rule Proposal and Staff Sports Event Contracts Advisory},year={2026},url={https://www.cftc.gov/PressRoom/PressReleases/9179-26},urldate={2026-07-19}}

@online{cftc_nprm_2026,author={{Commodity Futures Trading Commission}},title={CFTC Seeks Public Comment on Notice of Proposed Rulemaking Concerning Event Contracts Involving Enumerated Activities},year={2026},url={https://www.cftc.gov/PressRoom/PressReleases/9249-26},urldate={2026-07-19}}

@online{cftc_reporting_2026,author={{Commodity Futures Trading Commission}},title={CFTC Seeks Public Comment on Data Reporting Requirements for Certain Event Contracts},year={2026},url={https://www.cftc.gov/PressRoom/PressReleases/9261-26},urldate={2026-07-19}}

@online{cftc_enforcement_2026,author={{Commodity Futures Trading Commission}},title={CFTC Enforcement Division Issues Prediction Markets Advisory},year={2026},url={https://www.cftc.gov/PressRoom/PressReleases/9185-26},urldate={2026-07-19}}

@online{cftc_market_advisory_2026,author={{Commodity Futures Trading Commission}},title={CFTC Staff Issues Prediction Markets Advisory},year={2026},url={https://www.cftc.gov/PressRoom/PressReleases/9193-26},urldate={2026-07-19}}
\end{document}